\DeclareMathAlphabet{\mathcal}{OMS}{cmsy}{m}{n}
\DeclareSymbolFont{Letters}{OML}{cmm}{m}{it}
\def\l{\left(}
\def\r{\right)}
\newtheorem{theorem}{Theorem}
\newtheorem{definition}{Definition}
\def\bsl{\boldsymbol{\Lambda}}
\def\bsls{\boldsymbol{\Lambda}_\mathsf{S}}
\def\bslsi{\boldsymbol{\Lambda}_\mathsf{S}^{\textrm{inv}}}
\def\PH{\Phi_{\bsls}}
\def\PHI{\Phi_{\bslsi}}
\def\SC{*_{\bsls}}
\def\SCI{*_{\bslsi}}
\def\K{K_b}
\def\Ki{K_b^*}
\def\j{{\jmath}}
\def\DE{\stackrel{\mathrm{def}}{=}}
\newcommand{\DEq}[1]{\stackrel{(\ref{#1})}{=}}
\newcommand{\T}[1]{\mathscr{T}_{\textsf{SAFT}} \left[ #1 \right]}
\newcommand{\nvec}[1]{ {\underline{#1}}}
\newcommand{\bs}[1]{ {\boldsymbol{#1}}}
\newcommand{\saft}[1]{\widehat{#1}_{\boldsymbol{\Lambda}_\mathsf{S}} \left( \omega \right) }
\newcommand{\saftn}[1]{\widehat{#1}_{\boldsymbol{\Lambda}_\mathsf{1}} \left( \omega/\sqrt{2} \right) }
\newcommand{\up}[2]{ {\overset{\lower0.5em\hbox{$\smash{\scriptscriptstyle\rightharpoonup}$}} {{#1}}} \left( {#2} \right)}
\newcommand{\dn}[2]{ {\overset{\lower0.5em\hbox{$\smash{\scriptscriptstyle  \leftharpoonup}$}} {{#1}}} \left( {#2} \right)}
\begin{document}

\title{\Large \bf Convolution and Product Theorem \\ for the Special Affine Fourier Transform}

\author{\bf{Ayush Bhandari}}
\email{ayush@MIT.edu or Ayush.Bhandari@googlemail.com}
\author{\bf{Ahmed.~I.~Zayed}}
\affiliation{Massachusetts Institute of Technology} 
\affiliation{DePaul University}

%\email{azayed@condor.depaul.edu}

%\homepage{http://stoa.usp.br/thschiavo}
\markboth{\color{blue}\sf{Technical Report No. XX,~Vol.~XX, No.~X, Month~20XX}}%
{Bhandari and Zayed: Convolution and Product Theorem for the SAFT }

\date{\today}
\begin{abstract} The Special Affine Fourier Transform or the SAFT generalizes a number of well known unitary transformations as well as signal processing and optics related mathematical operations. Unlike the Fourier transform, the SAFT does not work well with the standard convolution operation.

Recently, Q. Xiang and K. Y. Qin  introduced a new convolution operation that is more suitable for the SAFT and by which the SAFT of the convolution of two functions is the product of their SAFTs and a phase factor. However, their convolution structure does not work well with the inverse transform in sofar as the inverse transform of the product of two functions is not equal to the convolution of the transforms.
 In this article we introduce a new convolution operation that works well with both the SAFT and its inverse leading to an analogue of the convolution and product formulas for the Fourier transform. Furthermore, we introduce a second convolution operation that leads to the elimination of the phase factor in the convolution formula obtained by Q. Xiang and K. Y. Qin.
\end{abstract}

%\keywords{latex-community, revtex4, aps journals, physics, papers, technical details}

\maketitle
\tableofcontents

\section{Introduction}

Let $\mathscr{T}: f \to \widehat{f}$ be a unitary, integral operator which maps $f$ to its transform domain representation $\widehat{f}$. For example, if $\mathscr{T}$ is the Fourier operator, that is,
\begin{equation}
\label{Fourier}
 {\mathscr{T}_\textsf{FT} \left[ f \right] } \left( \omega  \right) \DE \widehat f\left( \omega  \right) = \frac{1}{\sqrt{2\pi}}\int_\mathbb{R} {f\left( t \right){e^{ - \jmath \omega t}}dt},
\end{equation}
then $\widehat f$ is identified as the frequency domain of $f$. Furthermore, let $*$ denote the standard convolution operator defined by
$$(f*g)(t)=\frac{1}{\sqrt{2\pi}}\int_\mathbb{R}f(x)g(t-x)dx. $$ It is well known that for some functions $f$ and $g$, and $\mathscr{T} = \mathscr{T}_\textsf{FT}$, we have,
\begin{equation}
\label{CFT}
\mathscr{T}_\textsf{FT} \left[ {f * g} \right]\left( \omega  \right) = \mathscr{T}_\textsf{FT} \left[ f \right]\left( \omega  \right)\mathscr{T}_\textsf{FT} \left[ g \right]\left( \omega  \right).
\end{equation}
This result is known as the Fourier \textit{convolution theorem}.

The purpose of this paper is to extend and establish the Fourier convolution theorem for the \textbf{Special Affine Fourier Transform} (SAFT)---a phase space transform---which generalizes a number of well known transformations. Some of the interesting transformations and signal/optical operations that can be obtained from the SAFT as special cases are listed in Table~\ref{tab:1}.
\subsection{Phase Space Transformations}
Phase--Space transformations such the the fractional Fourier Transform (FrFT) \cite{Ozaktas2001} and the Linear Canonical Transform (LCT) \cite{Moshinsky} are becoming increasing popular in the areas of signal processing and communications. A remarkable feature of the \textit{phase space} transformations is that they generalize the Fourier Transformation and hence, all the mathematical developments are compatible with the Fourier analysis.

In recent years, a number of fundamental, signal processing centric theories for phase space have been developed. Some examples include convolution theorems \cite{Almeida1997, Zayed1998,Akay2001,Wei2009,Xiang2012,Bhandari2012}, sampling theory \cite{Xia1996,Stern2007, Tao2008,Tao2008a, Zhao2009,Bhandari2012, Shi2012}, time--frequency representations \cite{Capus2003,Tao2010}, shift--invariant signal approximation \cite{Bhandari2012}, sparse sampling theory \cite{Bhandari2010} and super-resolution theory \cite{Bhandari2015}.

\subsection{Convolution Theorems for the FrFT and the LCT}
In context of signal processing theory, Almeida first studied the fractional Fourier Transform (FrFT) domain representation of convolution and product operators \cite{Almeida1997}. Unfortunately, Almeida's formulation did not conform with the classical Fourier convolution--multiplication property. That is to say, the convolution of functions in time domain did not result in multiplication of their respective FrFT spectrums. As a follow up, Zayed formulated the convolution operation for the FrFT which resulted in an elegant convolution--multiplication property in FrFT domain \cite{Zayed1998}.

Recently, Xiang and Qin \cite{Xiang2012} introduced a new convolution operation that is more suitable for the SAFT and by which the SAFT of the convolution of two functions is the product of their SAFTs and a phase factor. However, their convolution structure does not work well with the inverse transform in sofar as the inverse transform of the product of two functions is not equal to the convolution of the transforms.

 In this article we introduce a new convolution operation that works well with both the SAFT and its inverse leading to an analogue of the convolution and product formulas for the Fourier transform. Furthermore, we introduce a second convolution operation that leads to the elimination of the phase factor in the convolution formula obtained in \cite{Xiang2012}.

%%%%%%%%%%%%%%%%%%%%%%%%%%%
%%%%%%%%%%%%%%%%%%%%%%%%%%%
\begin{figure*}[!t]
 \centering
\begin{equation}
\label{saftkernel}
{\kappa_{\bsls} }\left( {t,\omega } \right) = \K^*\exp \left( { - \frac{\j}{{2b}}\left( {a{t^2} + d{\omega ^2} + 2t\left( {p - \omega } \right) - 2\omega \left( {dp - bq} \right)} \right)} \right), \qquad \K = \frac{1}{{\sqrt {2\pi b} }}% \exp \left( {j\frac{{{dp^2}}}{{2b}}} \right)
\end{equation}
\hrule
\end{figure*}
%%%%%%%%%%%%%%%%%%%%%%%%%%%
%%%%%%%%%%%%%%%%%%%%%%%%%%%
\subsection{Special Affine Fourier Transform (SAFT)}
The SAFT was introduced by Abe and Sheridan \cite{Abe1994,Abe1994a} who studied a transformation in phase space that was associated with a general, inhomogeneous, lossless linear mapping. Such a transformation can model a number of optical operations such as rotation and magnification (see Table~\ref{tab:1}).

Let $f^*$ denote the complex--conjugate of $f$ and $\left\langle {f,g} \right\rangle  = \int {f\left( t \right){g^*}\left( t \right)dt}$ be the standard $L_2$ inner--product. The SAFT operation, that is, $\mathscr{T}_{\textsf{SAFT}} : f \to \widehat{f}_{\bsls}$, is defined as,
\begin{equation}
{\widehat f_{\bsls} }\left( \omega  \right) =  \begin{cases}
  {\left\langle {f,{\kappa _{\bsls} }\left( { \cdot ,\omega } \right)} \right\rangle }&{b \ne 0} \\
  {\sqrt d {e^{\j\frac{{cd}}{2}{{\left( {\omega  - p} \right)}^2} + \j\omega q}}x\left( {d\left( {\omega  - p} \right)} \right)}&{b = 0}
 \end{cases},
\label{saft}
\end{equation}
where,
\begin{itemize}
  \item $\bsls^{(2\times3)}$ is the augmented SAFT parameter matrix of form,
\begin{equation}
\label{SAFTmat}
{\bsls} = \left[ {\begin{array}{*{20}{c}}  \bsl&\vline & \nvec{\lambda  } \end{array}} \right]
\end{equation}
which is in turn parameterized by the LCT matrix $\bsl_\textsf{L}$ \cite{Moshinsky} (see Table~\ref{tab:1}) and an offset vector $\nvec{\lambda}$ such that,
\[\bs{\Lambda}  = \bigl [ {\begin{smallmatrix}
  a&b \\
  c&d
\end{smallmatrix}} \bigr] \mbox{ with } ad-bc = 1 \quad \mbox{and} \quad \nvec{\lambda}  = \bigl [ {\begin{smallmatrix}
  p \\
  q
\end{smallmatrix}} \bigr] . \]
This is the reason the SAFT is sometimes referred to as the \textit{Offset Linear Canonical Transform} or the OLCT.
\item $\kappa_{\bsls} \left(t,\omega \right)$ in (\ref{saftkernel}) is the SAFT kernel parameterized by SAFT matrix $\bsls$.
\end{itemize}

Thanks to the additive property of the SAFT/OLCT \cite{Pei2003}, the inverse--SAFT (or the iSAFT) is simply the SAFT evaluated using matrix $\bslsi$ with parameters,
\begin{equation}
\label{invmat}
\bslsi \DE  \left[ {\begin{array}{*{20}{c}}
  { + d}&{ - b}&\vline & {bq - dp} \\
  { - c}&{ + a}&\vline & {cp - aq}
\end{array}} \right] =
\left[ {\begin{array}{*{20}{c}}
  { + d}&{ - b}&\vline & {p_0} \\
  { - c}&{ + a}&\vline & {q_0}
\end{array}} \right].
\end{equation}
As a result, we are able to define the inverse transform iSAFT,
\begin{equation}
\label{iSAFT}
f\l t\r= {C_{\bslsi}}\left\langle {{{\widehat f}_{\bsls}},{\kappa _{\bslsi}}\left( { \cdot ,t} \right)} \right\rangle
\end{equation}
with some transform dependent phase constant, $$C_{\bslsi} = \exp \left( {\frac{\jmath }{2}\left( {cd{p^2} + ab{q^2} - 2adpq} \right)} \right).$$

Next, we develop a convolution structure for the SAFT denoted by $\SC$ so that we can obtain a representation of form,
$$\T{f \SC g} \propto \T{f}\T{g} $$
which is consistent with the Fourier convolution theorem (\ref{CFT}).
\section{Convolution Theorem for the SAFT}
Before we define the convolution operation in the SAFT domain, let us introduce the chirp modulation operation.

\begin{definition}[Chirp Modulation] \label{DefConv} Let $\mathbf{A}=[a_{j,k}]$ be a $2\times 2$ matrix. We define the modulation function,
\begin{equation}
\label{CP}
m_{\mathbf{A}} \l t \r \DE \exp\l \jmath \frac{ a_{1 1}}{ 2 a_{12}}  t^2 \r.
\end{equation}
Furthermore, for a given function $f,$  we define its chirp modulated functions associated with the matrix $\mathbf{A}$ as,
\begin{equation}
\label{upchirp}
\up{f}{t}  \DE   m_{\mathbf{A}} \l t \r f\l t\r \quad \mbox {and} \quad \dn{f}{t}  \DE   m^*_{\mathbf{A}} \l t \r f\l t\r.
\end{equation}
\label{CM}
\end{definition}
\noindent For example, let $\mathbf{A} = \bsls$, then, we have, $\up{f}{t} = m_{\bsls}f \l t \r = e^{\jmath \frac{a t^2}{2b}} f\l t \r $. For the case when $\mathbf{A} = \bslsi$, we get, $\up{f}{t} = m_{\bslsi}\l t \r f \l t \r = e^{-\jmath \frac{d t^2}{2b}} f\l t \r $.

Next we define the SAFT convolution operator.

\begin{definition}[SAFT Convolution/Filtering]  Let $f$ and $g$ be  two given functions and $*$ denote the usual convolution operation (see (\ref{CFT})). The SAFT convolution is defined by,
\begin{equation}
\label{SAFTconv}
h\l t \r = \l f \SC g\r \l t \r \DE \K m^*_{\bsls} \l t \r \l  \up{f}{t} * \up{g}{t} \r.
\end{equation}
%Furthermore, we define the operation $\SCI$ based on the inverse SAFT matrix $\bslsi$ in (\ref{invmat}) as the operation,
%\begin{equation}
%\label{ }
%h\l t \r = \l f \SCI g\r \l t \r \DE \Ki m_{\bsls} \l t \r \l  \dn{f}{t} * \dn{g}{t} \r.
%\end{equation}
\end{definition}

\begin{figure}[!t]
\begin{center}
\includegraphics[width=0.5\columnwidth]{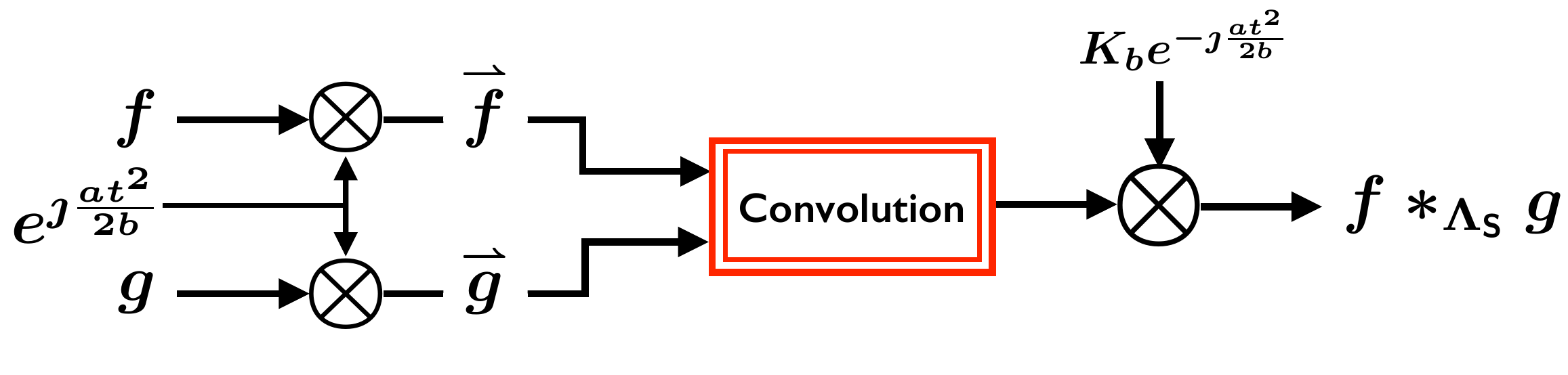}
\caption{Conceptual definition of SAFT convolution.}
\end{center}
\end{figure}

Figure~1~illustrates the block diagram for SAFT domain convolution defined in Definition~2. Next, we state the convolution and product theorem for the SAFT domain.

\begin{theorem}[SAFT Convolution and Product Theorem]
\label{thm:saftconv}
Let $f$ and $g$ be any two given functions for which the convolution $\SC$ exists and set,
$$ h\l t \r = \l f\SC g\r \l t \r. $$
Furthermore, let $\saft{f}, \saft{g}$ and $\saft{h}$ be the SAFT of $f,g$ and $h$, respectively. Then we have,
\[
h\l t \r = \l f\SC g\r  \l t \r \xrightarrow{{{\mathsf{SAFT}}}}   \saft{h} =\PH\l \omega \r \saft{f}\saft{g},
\]
where $\PH \left( \omega  \right) = {e^{\jmath \frac{\omega }{b}\left( {dp - bq} \right)}}{e^{-\jmath \frac{{d{\omega ^2}}}{{2b}}}}$. Moreover, let,
\[h\left( t \right) = \PHI \left( t \right)f\left( t \right)g\left( t \right) \mbox{ with } \PHI \left( t \right) = {e^{\jmath \frac{{a{t^2}}}{{2b}}}}{e^{ - \jmath \frac{t}{b}\left( {a{p_0} + b{q_0}} \right)}},\]
then, we have,
$\saft{h} = C_{\bslsi} \l \widehat{f} \SCI \widehat{g} \r \l \omega \r $.% where
%\[ \l \widehat{f} \SCI \widehat{g} \r \l \omega \r  = \Ki m^*_{\bslsi} \l t \r \l \up{\widehat f}{\omega}*\up{\widehat g}{\omega}\r  \]
%
\end{theorem}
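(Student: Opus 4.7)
The proof splits naturally into two halves, with the second following from the first by duality.

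For the convolution half, I would substitute the definition $h(t) = \K\,m^*_{\bsls}(t)\bigl(\up{f}{t} * \up{g}{t}\bigr)$ directly into $\saft{h} = \int h(t)\,\kappa^*_{\bsls}(t,\omega)\,dt$ and exploit the central cancellation: in the product $m^*_{\bsls}(t)\,\kappa^*_{\bsls}(t,\omega)$, the chirp $e^{-\jmath a t^2/(2b)}$ coming from $m^*_{\bsls}$ exactly kills the $e^{+\jmath a t^2/(2b)}$ contained in $\kappa^*_{\bsls}$. What remains $t$-dependent in the kernel is only the linear phase $e^{\jmath t(p-\omega)/b}$, so pulling the $\omega$-only chirps outside the integral reduces the computation to a standard Fourier transform of $\up{f}{\cdot}*\up{g}{\cdot}$ evaluated at frequency $\omega' = (\omega-p)/b$. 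The classical Fourier convolution theorem then factorizes this into the product of the ordinary Fourier transforms of $\up{f}{\cdot}$ and $\up{g}{\cdot}$ at $\omega'$. Running the same chirp-cancellation argument backwards rewrites each such ordinary Fourier transform as $\saft{f}$ (resp.\ $\saft{g}$) divided by the appropriate $\omega$-chirp; substituting back and consolidating yields $\PH(\omega)\,\saft{f}\,\saft{g}$, with the leftover phases assembling into exactly $e^{\jmath\omega(dp-bq)/b}\,e^{-\jmath d\omega^2/(2b)}$.

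For the product half, I would invoke the just-established convolution identity with $\bsls$ replaced by its inverse $\bsls^{-1}$, taking $\saft{f}$ and $\saft{g}$ as the input functions. That gives $\iS\{\saft{f}\,\SCI\,\saft{g}\}(t) = \Phi_{\bsls^{-1}}(t)\,\iS\{\saft{f}\}(t)\,\iS\{\saft{g}\}(t)$, where $\Phi_{\bsls^{-1}}(t)$ is what the formula for $\PH$ produces after substituting the entries of $\bsls^{-1}$. A short calculation using $ad-bc=1$ together with $a p_0 + b q_0 = -p$ verifies $\Phi_{\bsls^{-1}}(t) = \PHI(t)$. Next, the inverse-SAFT identity $f(t) = C_{\bsls^{-1}}\,\iS\{\saft{f}\}(t)$ (and likewise for $g$) collapses the right-hand side to $\PHI(t)\,f(t)\,g(t)/C_{\bsls^{-1}}^2 = h(t)/C_{\bsls^{-1}}^2$. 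Applying $\mathscr{S}_{\bsls}$ to both sides and invoking the additive property in the form $\mathscr{S}_{\bsls}\circ\iS = (1/C_{\bsls^{-1}})\,\mathrm{Id}$ peels off one factor of $C_{\bsls^{-1}}$ and delivers the claimed identity $\saft{h} = C_{\bsls^{-1}}\,(\widehat{f}\,\SCI\,\widehat{g})(\omega)$.

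The principal obstacle is not conceptual but careful algebra: tracking all the quadratic and linear chirp exponentials in $t$ and in $\omega$ through the cancellations so that the stated phases emerge cleanly, and matching normalization constants ($\K$, the $1/\sqrt{2\pi}$ in the definition of $*$, and $C_{\bsls^{-1}}$) without drift. A related delicate point is that the ``$b$'' entry of $\bsls^{-1}$ is $-b$, so the constant $\K$ and the chirp modulation used in defining $\SCI$ inherit complex square roots; the densest portion of the bookkeeping lies in confirming that these sign changes propagate consistently through the duality argument and match the $\PHI$ as written in the statement.
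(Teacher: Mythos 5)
Your proposal is correct. The first half is essentially the paper's own argument: substitute the definition of $\SC$ into the SAFT integral, cancel $m^*_{\bsls}m_{\bsls}=1$ against the quadratic chirp in $\kappa^*_{\bsls}$, and factor the resulting double integral into two one--dimensional integrals each of which is recognized as $\PH(\omega)\saft{f}$ up to a square root of $\PH^*$ (the paper inlines the proof of the Fourier convolution theorem via the substitution $t-z=v$ rather than citing it, but that is the same computation). The second half is where you genuinely diverge. The paper proves the product formula by a second, mirror--image integral computation: it writes $h$ as the inverse SAFT of $C_{\bslsi}\l \widehat{f}\SCI\widehat{g}\r$, runs the separability argument again with $\bslsi$ in place of $\bsls$, and verifies that the result is $\PHI(t)f(t)g(t)$. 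You instead obtain it by duality: apply the already--proved convolution identity with parameter matrix $\bslsi$ to the functions $\saft{f},\saft{g}$, check that the phase $\Phi$ produced by substituting the entries $(d,-b,-c,a,p_0,q_0)$ is exactly the $\PHI$ of the statement (it is, by direct substitution; your identity $ap_0+bq_0=-p$ is true but not even needed for the match), collapse $\mathscr{S}_{\bslsi}\{\saft{f}\}=f/C_{\bslsi}$ via the inversion formula, and then remove one $\mathscr{S}_{\bslsi}$ using $\mathscr{S}_{\bsls}\circ\mathscr{S}_{\bslsi}=(1/C_{\bslsi})\,\mathrm{Id}$ (which follows from the additivity property, and whose constant agrees with that of the reverse composition by the usual $ABA$ argument). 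Your route avoids redoing the integral and makes the structural reason for the product theorem transparent, at the price of leaning on the operator identities and on the convolution theorem holding for a general parameter matrix including one with a negative $b$--entry; the paper's direct verification costs a duplicated computation but never has to confront the composition constants. The normalization subtlety you flag --- that $\Ki$ and $m_{\bslsi}$ involve $\sqrt{-b}$ --- is real and is in fact glossed over in the paper as well, so it is a fair caveat rather than a gap.
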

\begin{proof}
We begin with computing the SAFT of $h$,
\begin{align*}
\saft{h} & \DEq{saft}  \T{h} \l \omega \r  = \left\langle {h\left( t \right),\kappa_{\bsls} \left( {t,\omega } \right)} \right\rangle\\
& = \int_\mathbb{R} {h\left( t \right){\kappa ^*}\left( {t,\omega } \right)dt}  \hfill \\
& \DEq{SAFTconv} \int_\mathbb{R} {\left( {\K m^*_{\bsls}\left( t \right)\int_\mathbb{R}  { \up{f}{z} \up{g}{t-z} dz} } \right){\kappa ^*}\left( {t,\omega } \right)dt} \\
&   = \underbrace{{\K^2}{e^{\jmath \frac{{d{\omega ^2}}}{{2b}}}}{e^{ - \jmath \frac{\omega }{b}\left( {dp - bq} \right)}}}_{C\l \omega \r}\int_\mathbb{R} {{e^{\jmath \frac{t}{b}\left( {p - \omega } \right)}}} \boxed{m_{\bsls}\left( t \right)} \left( {\boxed{m_{\bsls}^* \l t \r}\int_\mathbb{R} {\up{f}{z}\up{g}{t-z}dz} } \right)dt.
\end{align*}
In the above development, note that the items in the box cancel one another because $m^*_{\bsls}m_{\bsls} = 1$ (see Definition~\ref{CM}). Setting $t-z = v$ and using (\ref{upchirp}), we obtain an integral of separable form, that is, $\widehat{h}_{\bsls}\l \omega \r = {I_f}\left( \omega  \right){I_g}\left( \omega  \right)$ because,
\begin{align}
 & \underbrace{{\K^2} \PH^{*}\left( \omega  \right)\iint_\mathbb{R} {f\left( x \right)m_{\bsls}\left( x \right)g\left( v \right)m_{\bsls}\left( v \right){e^{\jmath \frac{{v + x}}{b}\left( {p - \omega } \right)}}dxdv}}_{ \widehat{h}_{\bsls}\l \omega \r} \notag \\
& \qquad \qquad \qquad \qquad \qquad \quad = {I_f}\left( \omega  \right){I_g}\left( \omega  \right) = \widehat{h}_{\bsls}\l \omega \r,
\label{sepint}
\end{align}
where, for a given function $f$, we define,
\begin{equation}
\label{IF}
{I_f}\left( \omega  \right) \DE {\K}\sqrt {\PH^{*}\left( \omega  \right)} \int_\mathbb{R} {f\left( z \right)m_{\bsls}\left( z \right){e^{\jmath \frac{z}{b}\left( {p - \omega } \right)}}dz}.
\end{equation}
Indeed, using (\ref{IF}) and (\ref{saft}), it is easy to see that,
\begin{equation}
\label{simplifiedIF}
I_f \l \omega \r = \sqrt{\PH^*\l \omega \r} \PH\l \omega \r \saft{f}
\end{equation}
and this result extends to $I_g\l \omega \r$ by symmetry.

We conclude,
\begin{align*}
&  \saft{h}  \DEq{sepint}I_f\l \omega \r I_g\l \omega \r   \\
      &  \DEq{simplifiedIF} \sqrt{\PH^*\l \omega \r} \PH\l \omega \r \saft{f} \cdot \sqrt{\PH^*\l \omega \r} \PH\l \omega \r \saft{g}\\
      & = \PH\l \omega \r \saft{f}\saft{g}
\end{align*}
which is the statement of part I of Theorem~\ref{thm:saftconv}.

Now we establish the product theorem for the SAFT,
$${\PHI}\l t \r f\l t\r g \l t \r \xrightarrow{{{\mathsf{SAFT}}}} C_{\bslsi} \l \widehat{f} \SCI \widehat{g} \r \l \omega \r.$$
Since the inverse--SAFT is the SAFT of a function with $\bsls = \bslsi$ in (\ref{invmat}), we have,
\begin{align*}
h\left( t \right)  = & C_{\bslsi}\Ki\times \\
& \int_\mathbb{R} {\saft{h}{e^{ - \jmath \frac{{\left( {a{t^2} + d{\omega ^2}} \right)}}{{2b}}}}{e^{ - \jmath \frac{\omega }{b}\left( {{p_0} - t} \right)}}{e^{\jmath \frac{t}{b}\left( {a{p_0} + b{q_0}} \right)}}d\omega }\\
= &C_{\bslsi}\Ki \PHI^*\l t \r \int_\mathbb{R} {\saft{h}{e^{ - \jmath \frac{{d{\omega ^2}}}{{2b}}}}{e^{ - \jmath \frac{{\omega \left( {{p_0} - t} \right)}}{b}}}d\omega }.
\end{align*}
By setting,
\begin{align*}
\saft{h} &= C_{\bslsi} \l \widehat{f} \SCI \widehat{g} \r \l \omega \r \\
&=C_{\bslsi} \Ki m_{\bslsi}^{*} \l \omega \r \l \up{\widehat{f}}{\omega} * \up{\widehat{g}}{\omega}\r,
\end{align*}
where, for $\bsls = \bslsi$ (see (\ref{invmat})), we have, $$m_{\bslsi} \l \cdot \r = e^{-\jmath \frac{d \l \cdot \r^2}{2b}} \mbox{ and }\up{f}{\omega} = m_{\bslsi} \l \omega \r f \l \omega \r.$$
Upon simplification, we obtain the separable integrals,
{
\begin{align*}
 h \l t \r & =   \l C_{\bslsi} \Ki \r^2 \PHI^*\l t \r  \iint \widehat{f} \l \nu \r m_{\bslsi}\l \nu \r  \widehat{g} \l \omega -  \nu \r m_{\bslsi}\l \omega - \nu \r d \nu  e^{ - \jmath \frac{{\omega \left( {{p_0} - t} \right)}}{b}} d\omega\\
& = \int   \up{\widehat{f}}{\nu}   e^{ - \jmath \frac{{\nu \left( {{p_0} - t} \right)}}{b}} d\nu
\int  \up{\widehat{g}}{\omega}  e^{ - \jmath \frac{{\omega \left( {{p_0} - t} \right)}}{b}} d\omega\\
& =  I_f\l t \r I_g\l t \r,
\end{align*}}
where,
\begin{align*}
I_f\l t \r & = C_{\bslsi} \Ki \sqrt{\PHI^*\l t \r} {\int_\mathbb{R}\up{f}{\omega}} e^{ - \jmath \frac{{\omega \left( {{p_0} - t} \right)}}{b}} d\omega. \\
& = \sqrt{\PHI^*\l t \r} \PHI \l t \r f\l t \r.
\end{align*}
As a result, we have,
\begin{align*}
 h \l t \r & =  I_f\l t \r I_g\l t \r \\
& = \sqrt{\PHI^*\l t \r} \PHI \l t \r f\l t \r \cdot \sqrt{\PHI^*\l t \r} \PHI \l t \r g \l t \r\\
& = \PHI \l t \r f \l t \r g \l t \r
\end{align*}
which is the desired result.
\end{proof}

\section{Comparison and Alternative Results}
%
%\noindent \textbf{Unification Perspective:} Thanks to the fact the SAFT generalizes a number of known transformation in Table~\ref{tab:1}, our convolution and product theorem is applicable to all the transformations listed in the table. Also, our result is compatible with the classical Fourier result listed in (\ref{CFT}). \\
%

In this section we compare our results with those of \cite{Xiang2012}. We show that our approach is not only easier to derive, but also provides more symmetric
formulas to implement. Our convolution formula is the same as the one given in  \cite{Xiang2012}, both assert that the SAFT of the convolution of two functions is the product of their SAFT
and a phase factor given by $\PH\l \omega \r$. But our product formula is different from that in \cite{Xiang2012} which states that the SAFT of the product of two functions $f$ and $g$ is,
\begin{equation}
\label{product7}
K_b^2  \PH^*\left( \omega  \right)\left( {\saft{f} \PH\left( \omega  \right)* \widehat{g}_{\boldsymbol{\Lambda}_{\mathsf{FT}}}\left( {\frac{\omega }{b}} \right)} \right).
\end{equation}

The reason our convolution and product formulas are more symmetric and simpler goes back to our definition of the chirp modulation, see Definition \ref{DefConv} , which uses the  adaptive  matrix $\bsls$ that accommodates both the forward and backward SAFT.

Furthermore, we will now derive another convolution for SAFT which eliminates the phase factor $\PH$ from the convolution formula.

\begin{definition}[Phase--free SAFT Convolution]
 Let $f$ and $g$ be two given functions and $*$ denote the usual convolution operation. The second SAFT convolution $\star$ is defined by,
%\begin{equation}
%\label{SAFTconv2}
\[ h( t ) =  (f \star g)( t ) \DE \sqrt{2} K_b m^*_{\bsls}\l t \r \left( \up{f}{.} * \up{g}{.}\right)(\sqrt{2}t). \]
%\end{equation}
%where $m(t)=\exp\left\{j(at^2/2b)\right\}.$
%Furthermore, we define the operation $\SCI$ based on the inverse SAFT matrix $\bslsi$ in (\ref{invmat}) as the operation,
%\begin{equation}
%\label{ }
%h\l t \r = \l f \SCI g\r \l t \r \DE \Ki m_{\bsls} \l t \r \l  \dn{f}{t} * \dn{g}{t} \r.
%\end{equation}
\end{definition}
%Let us denote the SAFT operator, which depends on the six parameters, $a,b,c,d,p,q$ by ${\cal F}_{(a,b,c,d,p,q)}$ or ${\cal F}_{(A,p,q)}$ where $A$ stands for $a,b,c,d.$  We have the following theorem.
In view of this SAFT--convolution, we have the following theorem.

\begin{theorem}
Let $h( t ) =  (f \star g)( t ).$ Then, we have,
$$
\saft{h} = \saftn{f} \saftn{g},
$$
where $\widehat{f}_{\boldsymbol{\Lambda}_1}$ denotes the SAFT of $f$ with respect to the matrix ${\boldsymbol{\Lambda}_1} = \left[  \boldsymbol{\Lambda} \ \  | \ \ \nvec{\lambda}/ \sqrt{2} \right]$ (cf. (\ref{SAFTmat})).

\end{theorem}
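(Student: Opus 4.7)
The plan is to mimic the computation of $\saft{h}$ from the proof of Theorem~\ref{thm:saftconv}, now adjusting for the $\sqrt{2}$ dilation inside the $\star$ convolution. The key observation is that this dilation produces linear-in-$t$ phases of the form $e^{\jmath\sqrt{2}t(p-\omega)/(2b)}$, which are exactly what the $\bsl_1$ kernel with offset $\nvec{\lambda}/\sqrt{2}$ generates when evaluated at frequency $\omega/\sqrt{2}$. This is precisely why only the offset of the target matrix gets scaled, and why no residual $\PH(\omega)$ factor survives.

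First, I would write $\saft{h} = \int_\mathbb{R} h(t)\kappa^*_{\bsls}(t,\omega)\,dt$, substitute $h(t)=\sqrt{2}\K m^*_{\bsls}(t)\bigl(\up{f}{\cdot}*\up{g}{\cdot}\bigr)(\sqrt{2}t)$, and make the change of variables $u=\sqrt{2}t$. The Jacobian $1/\sqrt{2}$ combines with the prefactor $\sqrt{2}\K$ and the $\K$ inside $\kappa^*_{\bsls}$ to yield the overall constant $\K^2$, while the $at^2$ term of the kernel becomes $e^{\jmath au^2/(4b)}$ and exactly cancels $m^*_{\bsls}(u/\sqrt{2})=e^{-\jmath au^2/(4b)}$, killing the $u$-chirp---mirroring the cancellation in the proof of Theorem~\ref{thm:saftconv}.

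Next, I would expand $(\up{f}{\cdot}*\up{g}{\cdot})(u)=\int \up{f}{z}\up{g}{u-z}\,dz$, apply Fubini, and substitute $v=u-z$ so that the double integral factors. Pulling the $\omega$-only prefactor $\K^2 e^{\jmath d\omega^2/(2b)}e^{-\jmath\omega(dp-bq)/b}$ outside and writing it as the square of its half-exponent counterpart $\K e^{\jmath d\omega^2/(4b)}e^{-\jmath\omega(dp-bq)/(2b)}$, I would distribute one copy to each of the two separable integrals. Each resulting factor is then recognized, by direct inspection of the kernel $\kappa^*_{\bsl_1}(t,\omega/\sqrt{2})$, as $\saftn{f}$ and $\saftn{g}$ respectively.

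The main obstacle is pure algebraic bookkeeping: one must verify that the linear-in-$t$ coefficient $\sqrt{2}(p-\omega)/(2b)$ produced by the change of variables agrees with the $2\bigl(p/\sqrt{2}-\omega/\sqrt{2}\bigr)/(2b)$ coming from the $\bsl_1$ kernel at $\omega/\sqrt{2}$, and that the offset phase $-\omega(dp-bq)/b$ splits in half to match $-2(\omega/\sqrt{2})\bigl(d(p/\sqrt{2})-b(q/\sqrt{2})\bigr)/(2b)$ from each factor. Both reduce to trivial identities, and the desired formula $\saft{h}=\saftn{f}\saftn{g}$ then follows.
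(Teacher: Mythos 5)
Your proposal is correct and follows essentially the same route as the paper's own proof: substitute the definition of $\star$, cancel the chirp $m^*_{\bsls}$ against the quadratic term of the kernel, separate the double integral via a linear change of variables, split $K_b^2\PH^*(\omega)$ evenly between the two factors, and identify each factor as $\saftn{f}$, $\saftn{g}$. The only cosmetic difference is that you rescale $u=\sqrt{2}t$ before splitting the convolution, whereas the paper absorbs both steps into the single substitution $x=\sqrt{2}t-\tau$; the bookkeeping identities you flag are exactly the ones the paper verifies.
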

\begin{proof}
Let $\Omega  = bq - dp$. We have,
\begin{eqnarray*}
\saft{h}&=& \sqrt{2} K^2_b e^{-\frac{\j at^2}{2b}} \int_{\mathbb{R}}e^{\frac{\j}{2b} \l at^2+d\omega^2 +2tp-2t\omega +2\omega \Omega \r}dt\\
&\times & \int_{\mathbb{R}}f(\tau)e^{\frac{\j at^2}{2b}}g(\sqrt{2}t-\tau)e^{\frac{\j a}{2b}(\sqrt{2}t-\tau)^2}d\tau.
\end{eqnarray*}
Setting $x=\sqrt{2}t-\tau$ and simplifying the integrals, we obtain,
\begin{align*}
\saft{h}&= K^2_b \PH^* (\omega) \int_{\mathbb{R}}e^{\frac{\j a\tau^2}{2b}}f(\tau) d\tau   \int_{\mathbb{R}} g(x)
e^{ \frac{\j }{2b} \l ax^2+ 2(p-\omega)(x+\tau)/\sqrt{2}\r } dx\\
&=    K^2_b \PH^* (\omega) \int_{\mathbb{R}}e^{ \frac{\j }{2b} \l a\tau^2+ 2(p-\omega)\tau /\sqrt{2}\r }f(\tau) d\tau    \int_{\mathbb{R}} g(x)
e^{ \frac{\j }{2b} \l ax^2+ 2(p-\omega)x/\sqrt{2}\r } dx    \\
&=   K^2_b \PH^* (\omega) \int_{\mathbb{R}}  e^{ \frac{\j }{2b} \l a\tau^2+ \sqrt{2}p\tau -\sqrt{2}\tau\omega \r }f(\tau) d\tau \int_{\mathbb{R}}  e^{ \frac{\j }{2b} \l ax^2+ \sqrt{2}px -\sqrt{2}x\omega) \r }g(x) dx.
\end{align*}
But since, $$ \PH^* (\omega)= {e^{\jmath \frac{1}{{2b}}\left( {2d{{\left( {\frac{\omega }{{\sqrt 2 }}} \right)}^2} + 2\sqrt 2 \Omega \frac{\omega }{{\sqrt 2 }}} \right)}},$$
it follows, that $\saft{h}=I_f(\omega) I_g(\omega)$,
where
{ $$ I_f(\omega)= K_b  \int_{\mathbb{R}}  e^{ \frac{\j }{2b} \l a\tau^2+ d\left(\frac{\omega}{\sqrt{2}}\right)^2+\sqrt{2}p\tau -\sqrt{2}\tau\omega
+ \sqrt{2}\Omega(\omega/\sqrt{2})\right]}f(\tau) d\tau
$$} and similar expression for $I_g(\omega).$ But it is easy to see that
$$ I_f(\omega)=\saftn{f}, $$ and this completes the proof.
\end{proof}
%
%A version of the convolution theorem for the SAFT was presented in \cite{Xiang2012} (cf. Definition 1) which is as follows.
%
%
%\begin{definition}[Xiang et al. Let $f$ and $g$ be any two functions such that $f,g \in \cal{W}$ where $\cal{W} $ is a subspace of square--integrable functions such that $f\in \mathcal{W} \Leftrightarrow \widehat{f}_{\boldsymbol{\Lambda}_{\mathsf{FT}}} \in \mathcal{W}$. The SAFT convolution denoted by $\otimes_{\bsls}$ is then defined as,
%\begin{equation} \label{XConv}
%h\left( t \right) = \left( {f \otimes_{\bsls} g} \right)\left( t \right) = {C_b}\int {f\left( x \right)g\left( {t - x} \right){e^{ - \jmath \frac{{ax}}{b}\left( {t - x} %\right)}}dx},
%\end{equation}
%where the constant $C_b = {\left( {\jmath 2\pi b} \right)^{ - \frac{1}{2}}}{e^{\jmath \frac{d}{{2b}}p_0^2}}$.
% $\end{definition}

%While this definition of convolution is equivalent to our version in that rearranging the terms in (\ref{SAFTconv}) leads to (\ref{XConv}) we note that the proof of our theorem is substantially simpler. The proof of Theorem~$1$ in \cite{Xiang2012} uses delay and modulation property of the SAFT (cf. $(4)$ and $(6)$ in \cite{Xiang2012}). Our proof is self--contained and based on modulation properties of the Fourier transform. All the more, the product theorem that complements our convolution theorem differs from the one presented in \cite{Xiang2012} (cf.~Theorem~2 pg. 436).\\

\noindent \textbf{Relation to Convolution theory of LCTs:} In the special case where $p=q=0 \Leftrightarrow \bsls = \boldsymbol{\Lambda}_{\mathsf{LCT}}$, the SAFT reduces to the LCT and the last convolution theorem takes the simple form
$$\mathscr{T}_\textsf{LCT}[f\star g](\omega)=\mathscr{T}_\textsf{LCT}[f]\left(\omega/\sqrt{2}\right)\mathscr{T}_\textsf{LCT}[g]\left(\omega/\sqrt{2}\right).$$
\section{Conclusion}
In this letter, we introduced two definitions of convolution operation that establish the convolution--product theorem for the Special Affine Fourier Transform (SAFT) introduced by Abe and Sheridan \cite{Abe1994,Abe1994a}. Our result is quite general in that our convolution--product theorem is applicable to all the listed unitary transformations in Table~\ref{tab:1}. Furthermore, we also presented a product theorem for the SAFT which establishes the fact that the product of functions in time amounts to convolution in SAFT domain. We conclude that our construction of the convolution structure for the SAFT domain establishes the SAFT duality principle, that is, convolution in one domain amounts to multiplication in the transform domain and vice--versa. Our results can be used to develop the semi--discrete convolution structure \cite{Bhandari2012} for sampling and approximation theory linked with the Special Affine Fourier Transform, but this will be done in a separate project. \\

\newpage
{\rm %<><><><><><><><><><><><><><><><><><><><><><><><><><><>
 % 		PARAMETERS OF LCT AND TRANSFORMATION
 %<><><><><><><><><><><><><><><><><><><><><><><><><><><>
\begin{table}[t]
\centering
%\caption{Linear Canonical Transform as a generalization of other well known transformations.}
\caption{ \textsf{SAFT, Unitary Transformations and Operations}}
\begin{tabular*}{0.57\textwidth}{p{4cm} p{6cm}}%{ll}
\toprule
\addlinespace
%\multicolumn{2}{c}{%
%$\boxed{\phi _{\pmb{\Lambda}}  \left( {t,\omega } \right) = \tfrac{1}
%{{\sqrt {-j2\pi b} }}\exp \left\{ { - \tfrac{j}
%{{2b}}\left( {\left( {at^2  + d\omega ^2 } \right) - 2\omega t} \right)} \right\} \qquad (1)}$} \\  
%\addlinespace
%(r){1-2}
%
\rowcolor{blue!10} 
\hline
SAFT Parameters $\left(\bsls \right) $ & Corresponding Unitary Transform \\
\hline
\addlinespace
$\bigl[ \begin{smallmatrix} a &b  & \vline & & {0}   \\
 c& d & \vline& &{0} \end{smallmatrix} \bigr] = \pmb\Lambda_\textsf{L} $				&  \textbf{Linear Canonical Transform} \\	[2.5pt]

$\bigl[ \begin{smallmatrix} &\cos\theta&\sin\theta  & \vline & & {p}   \\
 -&\sin\theta&\cos\theta & \vline& &{q} \end{smallmatrix} \bigr] = \pmb\Lambda_\theta^O $				&  \textbf{Offset Fractional Fourier Transform} \\	[2.5pt] 

$\bigl[ \begin{smallmatrix} &\cos\theta&\sin\theta  & \vline & & {0}   \\
 -&\sin\theta&\cos\theta & \vline& &{0} \end{smallmatrix} \bigr] = \pmb\Lambda_\theta $				&  \textbf{Fractional Fourier Transform} \\	[2.5pt] 
%      			& each      \\

$\bigl[ \begin{smallmatrix} &0 &1& \vline & & {p}  \\-&1&0 & \vline & & {q} \end{smallmatrix} \bigr] 
= \pmb\Lambda_\text{FT}^O$  													& \textbf{Offset Fourier Transform (FT)}  \\		[2.5pt] 

$\bigl[ \begin{smallmatrix} &0 &1& \vline & & {0}  \\-&1&0 & \vline & & {0} \end{smallmatrix} \bigr] 
= \pmb\Lambda_\text{FT}$  													& \textbf{Fourier Transform (FT)}  \\		[2.5pt] 
$\bigl[ \begin{smallmatrix} 0 & \j & \vline & & {0} \\ \j & 0 & \vline & & {0} \end{smallmatrix} \bigr] 
= \pmb\Lambda_\text{LT}$ 													& \textbf{Laplace Transform (LT)}  \\	[2.5pt] 
$\bigl[ \begin{smallmatrix} \j \cos\theta &\j \sin\theta & \vline & & {0}  \\
 \j \sin \theta & -\j\cos\theta & \vline & & {0} \end{smallmatrix} \bigr] $  								& \textbf{Fractional Laplace Transform}   \\ 	[2.5pt] 
$\bigl[ \begin{smallmatrix} 1 & b & \vline & & {0}  \\0&1 & \vline & & {0} \end{smallmatrix} \bigr]$  						& \textbf{Fresnel Transform}  \\	[2.5pt] 
$\bigl[ \begin{smallmatrix} 1&\jmath b & \vline & & {0}  \\  \jmath&1 & \vline & & {0} \end{smallmatrix} \bigr]$				& \textbf{Bilateral Laplace Transform}  \\	[2.5pt] 
$\bigl[ \begin{smallmatrix} 1&-\jmath b & \vline & & {0}  \\ 0&1 & \vline & & {0} \end{smallmatrix} \bigr]$, $b \ge 0$  		& \textbf{Gauss--Weierstrass Transform}   \\	[2.5pt] 
$\tfrac{1}{{\sqrt 2 }} \bigl[\begin{smallmatrix} 0 & e^{ - {{\jmath\pi } 
\mathord{\left/{\vphantom {{j\pi } 2}} \right.\kern-\nulldelimiterspace} 2}} & \vline & & {0} 
\\-e^{ - {{\jmath\pi } \mathord{\left/{\vphantom {{j\pi } 2}} \right.\kern-\nulldelimiterspace} 2}} 
&1 & \vline & & {0} \end{smallmatrix} \bigr]$  													& \textbf{Bargmann Transform} \\[2.5pt] 
%
%\midrule
\addlinespace
\hline
\rowcolor{blue!10} 
SAFT Parameters $\left(\bsls \right) $ & Corresponding Signal Operation \\
\hline
%\midrule
\addlinespace
$\bigl[ \begin{smallmatrix} 1/\alpha& 0  & \vline & & {0}   \\
 0 & \alpha & \vline& &{0} \end{smallmatrix} \bigr] = \pmb\Lambda_\alpha $				&  \textbf{Time Scaling} \\	[2.5pt] 

$\bigl[ \begin{smallmatrix} 1 & 0  & \vline & & {\tau}   \\
 0 & 1 & \vline& &{0} \end{smallmatrix} \bigr] = \pmb\Lambda_\tau $				&  \textbf{Time Shift} \\	[2.5pt]

$\bigl[ \begin{smallmatrix} 1 & 0  & \vline & & {0}   \\
 0 & 1 & \vline& &{\xi} \end{smallmatrix} \bigr] = \pmb\Lambda_\xi $				&  \textbf{Frequency Shift} \\	[2.5pt] 

\addlinespace
\hline
\rowcolor{blue!10} 
SAFT Parameters $\left(\bsls \right) $ & Corresponding Optical Operation \\
\hline
%\midrule
\addlinespace
$\bigl[ \begin{smallmatrix} &\cos\theta&\sin\theta  & \vline & & {0}   \\
 -&\sin\theta&\cos\theta & \vline& &{0} \end{smallmatrix} \bigr] = \pmb\Lambda_\theta $			&  \textbf{Rotation} \\	[2.5pt] 

$\bigl[ \begin{smallmatrix} 1 & 0  & \vline & & {0}   \\
 \tau & 1 & \vline& &{0} \end{smallmatrix} \bigr] = \pmb\Lambda_\tau $				&  \textbf{Lens Transformation} \\	[2.5pt] 
  
 $\bigl[ \begin{smallmatrix} 1 & \eta  & \vline & & {0}   \\
 0 & 1 & \vline& &{0} \end{smallmatrix} \bigr] = \pmb\Lambda_\eta $				&  \textbf{Free Space Propagation} \\	[2.5pt] 
 
 $\bigl[ \begin{smallmatrix} e^{\beta} & 0  & \vline & & {0}   \\
 0 & e^{-\beta}  & \vline& &{0} \end{smallmatrix} \bigr] = \pmb\Lambda_\beta $				&  \textbf{Magnification} \\	[2.5pt] 
 
  $\bigl[ \begin{smallmatrix} \cosh\alpha & \sinh\alpha  & \vline & & {0}   \\
\sinh\alpha & \cosh\alpha  & \vline& &{0} \end{smallmatrix} \bigr] = \pmb\Lambda_\eta $				&  \textbf{Hyperbolic Transformation} \\	[2.5pt] 

\bottomrule
\end{tabular*}
\label{tab:1}
%\vspace{-0.5cm}
\end{table}}

\bibliographystyle{IEEEtran}
\bibliography{SAFTShannon.bib}

\end{document}